\newtheorem{theorem}{Theorem}
\newtheorem{assumption}[theorem]{Assumption}
\newtheorem{corollary}[theorem]{Corollary}
\newenvironment{proof}{ \textbf{Proof:} }{ \hfill $\Box$}
\def\bb0{{\mathbb{0}}}
\def\bb{{\mathbf{b}}}
\def\b0{{\mathbf{0}}}
\def\bbE{{\mathbb{E}}}
\def\bbP{{\mathbb{P}}}
\def\bbR{{\mathbb{R}}}
\def\sf0{{\mathsf{0}}}
\begin{document}

\title{Asymptotic Coverage Probability and Rate in Massive MIMO Networks}

\author{
Tianyang Bai and Robert W. Heath, Jr.

\thanks{The authors are with the Wireless Networking and Communications
Group, the University of Texas at Austin, Austin, TX, USA.
(email: rheath@utexas.edu). This material is based upon work supported in part by the National Science Foundation under Grant No. NSF-CCF-1218338. }}

\maketitle
\begin{abstract}
Massive multiple-input multiple-output (MIMO) is a transmission technique for cellular systems that uses many antennas to support not-as-many users. Thus far, the performance of massive MIMO has only been examined in finite cellular networks. In this letter, we analyze its performance in random cellular networks with Poisson distributed base station locations. Specifically, we provide analytical expressions for the asymptotic coverage probability and rate in both downlink and uplink when each base station has a large number of antennas. The results show that, though limited by pilot contamination, massive MIMO can provide significantly higher asymptotic data rate per user than the single-antenna network.
\end{abstract}

\section{Introduction}
Massive multiple-input multiple-out (MIMO) communication is a multiuser transmission strategy where an order of magnitude more antennas than in conventional systems are used to serve a large number of users \cite{Marzetta2010}. Massive MIMO exploits pilot reuse to reduce training overhead and reciprocity in the channel to avoid feedback.

Prior work showed that high throughput could be achieved with a large number of antennas through simple matched filtering, eliminating the effects of noise, and that pilot contamination was a performance limiting factor \cite{Marzetta2010}. Asymptotic performance was characterized for a more comprehensive set of channels and beamforming designs in \cite{Hoydis2013}, but was still limited by pilot contamination. The impact of pilot contamination on the asymptotic SIR distribution was studied in \cite{Gopalakrishnan2011}. Coordination can be used to avoid pilot contamination \cite{Ashikhmin2012}, but only for a cluster of cells. Performance analysis in cellular systems as in \cite{Marzetta2010,Ashikhmin2012,Hoydis2013} considered a small number of base stations with a hexagonal topology. Recently, simple characterizations of cellular network performance were developed in \cite{Andrews2011b} using stochastic geometry where the base stations are randomly located. The work in \cite{Andrews2011b} does not readily extend to massive MIMO networks because  pilot contamination creates an additional self-interference that is not included in that original model.

In this letter, we analyze the performance of large-scale massive MIMO networks, where locations of the BSs are assumed to form a Poisson point process (PPP). First, we prove that
the signal-to-interference ratio (SIR) in a large-scale network still converges to an equivalent function, which only depends on the distance between interferers, as the number of antennas goes to infinity. Next,
we provide expressions for the asymptotic coverage probability and rate in both downlink and uplink by deriving the distribution of the equivalent function. Our analytical results show that the asymptotic rate achievable per user is superior to a comparable network with single antenna base stations, despite the presence of pilot contamination. The asymptotic sum rate per cell is also characterized and is found to be proportional to the length of the channel coherence time. Our results apply to massive MIMO systems using matched filtering beamforming and shared pilots. The extension to more sophisticated beamforming strategies in \cite{Hoydis2013} or to more coordination as in  \cite{Ashikhmin2012} are interesting topics of future work.

\section{System model}\label{system}
We consider a massive MIMO cellular network with perfect synchronization that operates under the following assumptions.
\begin{assumption}[TDD] \label{TDD} A three stage approach leveraging reciprocity due to TDD is assumed, as proposed in \cite{Marzetta2010}.
\begin{enumerate}
\item {\it Channel estimation}: Each user sends pre-assigned orthogonal pilot sequence of length $K$ to its BS. Denote the orthnormal pilot set as $\{\Phi_k\}_{1\le k\le K}$. The pilot set are assumed to be reused among all cells. 
\item {\it Uplink Data}: Users send uplink data to their BS. Each BS applies a maximum ratio combining receiver derived from the channel estimates.
\item {\it Downlink Data}: BSs send downlink data using matched-filter precoding.
\end{enumerate}
\end{assumption}
\begin{assumption}[PPP BS] The base stations (BSs) form a homogeneous PPP $\mathcal{N}_\mathrm{b}=\left\{X_{\ell}\right\}$ with density $\lambda_\mathrm{b}$ on the plane. Each base station is equipped with $M$ antennas. The network performance is examined when $M$ goes to infinity.
\end{assumption}
\begin{assumption}[PPP Users] During a block of time, each BS randomly schedules $K$ users in its cell. Each $k$-th user is assigned the $k$-th pilot sequence and the locations of all the users sharing the same pilot sequence are assumed to form an independent PPP with density $\lambda_\mathrm{b}$. Let  $Y^{(k)}_{i}$ denote the location of the $k$-th user in cell $\ell$, and $\Phi_\mathrm{u}^{(k)}=\left\{Y_{\ell}^{(k)}\right\}$ be the PPP of the users using the $k$-th pilot. The typical BS X0 is assumed to be associated with the nearest user in $\Phi_\mathrm{u}^{(k)}$. All users are equipped with one antenna.
\end{assumption}

Note that the actual locations of the users using the same pilot form a general stationary point process of density $\lambda_\mathrm{b}$, which is not Poisson, for the location of a user is not totally independent but restricted inside the Voronoi cell of its associated BS. The PPP user approximation, however, provides tractability in the uplink analysis.
\begin{assumption}[Independent Block Fading]
The channel is constant during one block and fades independently from block to block. Denote $\mathbf{h}^{(k)}_{\ell i}\in\mathbb{C}^{M\times1}$ as the channel vector from BS $X_\ell$ to $Y^{(k)}_{i}$. Each entry of $\mathbf{h}^{(k)}_{\ell i}$ satisfies $\mathbf{h}^{(k)}_{\ell i}(m)=\sqrt{\beta^{(k)}_{\ell i}}v^{(k)}_{\ell i}(m)$, where $\beta^{(k)}_{\ell i}$ is the path gain and shadowing from $X_\ell$ to $Y^{(k)}_{i}$, and $v^{(k)}_{\ell i}(m)$ accounts for the small-scale fading. In this letter, we assume independent Rayleigh fading between antennas, i.e. $\forall i,\ell,k$, and $m$, $v^{(k)}_{\ell i}(m)$ are independent complex random variables of distribution $\mathcal{CN}(0,1)$, and there is no shadowing. We use the log-distance path loss so that $\beta^{(k)}_{\ell i}=\left({r^{(k)}_{\ell i}}\right)^{-\alpha}$, where ${r^{(k)}_{\ell i}}$ is the distance from $X_\ell$ to $Y^{(k)}_{i}$, and $\alpha$ is the path loss exponent.
\end{assumption}
\begin{assumption}[No Noise]
The network is interference-limited, thus thermal noise is neglected in the analysis.
\end{assumption}
\begin{assumption}[No MMSE] The BSs estimate the channel by correlating the received training signal with the corresponding pilot and do not consider more sophisticated minimum mean squared error (MMSE) estimation as employed in  \cite{Hoydis2013}. Hence in the channel estimation stage, the received training signal by BS $X_\ell$ is $\mathbf{e}_\ell=\sum_{X_{\ell'}\in\mathcal{N}_\mathrm{b}}\sum_{1\le k\le K}\mathbf{h}_{\ell\ell'}^{(k)}\Phi_k^{*}.$
Its estimation of the channel to the $q$-th user $Y_\ell^{(k)}$ in its cell is
\begin{align*}
\mathbf{g}_{\ell\ell}^{(q)}=\mathbf{e}_\ell\Phi_q=\mathbf{h}_{\ell\ell}^{(q)}+\sum_{\ell'\ne\ell}\mathbf{h}_{\ell\ell'}^{(q)}.
\end{align*}
\end{assumption}
Note that $\mathbf{g}_{\ell\ell}^{(q)}$ is the estimation of $\mathbf{h}_{\ell\ell}^{(q)}$, and the estimation error is $\sum_{\ell'\ne\ell}\mathbf{h}_{\ell\ell'}^{(q)}$, which is caused by pilot contamination.

Assumptions 1-6 represent a simple massive MIMO systems in which each term of the SIR expression can be written in a simple form and analyzed using stochastic geometry. In the following sections, we examine the impact of the channel error due to pilot contamination in terms of coverage probability and rate in both downlink and uplink transmission.

\section{Asymptotic Analysis of Massive MIMO Network}
\subsection{Downlink Analysis}
To analyze the coverage probability in the downlink, a typical user denoted as $Y_{0}^{(1)}$, is located at the origin. Let $s_{\ell}^{(k)}$ be the data symbol that BS $X_\ell$ transmits to its $k$-th user $Y_{\ell}^{(k)}$. Assume that the data symbol $s_{\ell}^{(k)}$ are i.i.d. distributed with zero mean and unit variance. Let $\mathbf{w}_{\ell}^{(k)}$ denote the precoding vector at BS $X_\ell$ for the user $Y_{\ell}^{(k)}$. By Assumption \ref{TDD}, in the downlink, a BS uses the channel estimates obtained from the uplink training to do match-filter precoding. We first consider a simplified situation where there is no power constraint at the BS. In this case, let $\mathbf{w}_{\ell}^{(k)}=\mathbf{g}_{\ell}^{(k)}$. Then the typical user $Y_{0}^{(1)}$ at the origin receives
\begin{align*}
\hat{s}_{0}^{(1)}
&=\sum_{k=1}^{K}\sum_{X_\ell\in\mathcal{N}_\mathrm{b}}\mathbf{h}_{\ell0}^{(1)*}\mathbf{g}_{\ell}^{(k)}s_{\ell}^{(1)}\\
&=\sum_{k=1}^{K}\sum_{X_\ell\in\mathcal{N}_\mathrm{b}}\sum_{X_{\ell'}\in\mathcal{N}_\mathrm{b}}\mathbf{h}_{\ell0}^{(1)*}\mathbf{h}_{\ell\ell'}^{(k)}s_{\ell}^{(k)}.
\end{align*}
The downlink $\mbox{SIR}_\mathrm{DL}$ can be expressed as
\begin{align*}\label{downsir}
\frac{\|h_{00}^{(1)}\|^4}{\sum_{\ell\ne0}\|h_{\ell0}^{(1)}\|^4+\sum_{\ell'\ne0,\ell}|h_{\ell0}^{(1)*}h_{\ell\ell'}^{(1)}|^2+\sum_{k\ne1,\ell,\ell'}|h_{\ell0}^{(1)*}h_{\ell\ell'}^{(k)}|^2}.
\end{align*}
When the number of BS antennas goes to infinity, we have the following convergence theorem.
\begin{theorem}\label{thm:downsir}
When $M\to\infty$, $\mathrm{SIR}_\mathrm{DL}\stackrel{a.s.}{\to}\frac{\left(\beta_{00}^{(1)}\right)^2}{\sum_{\ell\ne0}\left(\beta_{\ell0}^{(1)}\right)^2}$.
\end{theorem}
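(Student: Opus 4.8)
The plan is to divide both the numerator and the denominator of $\mathrm{SIR}_\mathrm{DL}$ by $M^2$ and identify the almost-sure limit of each piece by applying the strong law of large numbers (SLLN) across the $M$ antennas, conditionally on the realization of the point processes $\mathcal{N}_\mathrm{b}$ and $\{\Phi_\mathrm{u}^{(k)}\}$ (given the geometry, the path gains $\beta^{(k)}_{\ell\ell'}$ are deterministic and only the $v^{(k)}_{\ell\ell'}(m)$ are random, and these are i.i.d.\ over $m$).

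For any channel, $\|\mathbf{h}_{\ell0}^{(k)}\|^2=\beta_{\ell0}^{(k)}\sum_{m=1}^{M}|v_{\ell0}^{(k)}(m)|^2$ with the $|v_{\ell0}^{(k)}(m)|^2$ i.i.d.\ unit-mean, so SLLN gives $\tfrac1M\|\mathbf{h}_{\ell0}^{(k)}\|^2\to\beta_{\ell0}^{(k)}$ a.s., hence $\tfrac{1}{M^2}\|\mathbf{h}_{\ell0}^{(k)}\|^4\to(\beta_{\ell0}^{(k)})^2$ a.s. In particular the numerator satisfies $\tfrac{1}{M^2}\|\mathbf{h}_{00}^{(1)}\|^4\to(\beta_{00}^{(1)})^2$, and each pilot-contamination term gives $\tfrac{1}{M^2}\|\mathbf{h}_{\ell0}^{(1)}\|^4\to(\beta_{\ell0}^{(1)})^2$ (this term is $\Theta(M^2)$ and does not wash out — that is pilot contamination). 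Whenever $\mathbf{h}_{\ell0}^{(1)}$ and $\mathbf{h}_{\ell\ell'}^{(k)}$ are independent — i.e.\ $(\ell',k)\neq(0,1)$ — the products $\overline{v_{\ell0}^{(1)}(m)}\,v_{\ell\ell'}^{(k)}(m)$ are i.i.d.\ with mean $0$ and variance $1$, so SLLN gives $\tfrac1M\mathbf{h}_{\ell0}^{(1)*}\mathbf{h}_{\ell\ell'}^{(k)}\to0$ a.s., hence $\tfrac{1}{M^2}|\mathbf{h}_{\ell0}^{(1)*}\mathbf{h}_{\ell\ell'}^{(k)}|^2\to0$ a.s.; thus each cross term is $\Theta(M)$ and is negligible after normalization. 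Term by term this already gives the claimed limit.

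The substantive part is justifying that the $M\to\infty$ limit may be exchanged with the sums over the (a.s.\ infinite) BS process. For the pilot-contamination sum I would split the BSs into those within a radius $R$ of the origin and the rest: the finitely many near terms converge as above, while the tail $\tfrac{1}{M^2}\sum_{\|X_\ell\|>R}\|\mathbf{h}_{\ell0}^{(1)}\|^4$ has conditional mean $(1+\tfrac1M)\sum_{\|X_\ell\|>R}(\beta_{\ell0}^{(1)})^2$ (using $\mathbb{E}[\|\mathbf{h}_{\ell0}^{(1)}\|^4\mid\mathcal{N}_\mathrm{b}]=(\beta_{\ell0}^{(1)})^2M(M+1)$), which is made arbitrarily small by taking $R$ large because $\sum_{\ell}r_{\ell0}^{-2\alpha}<\infty$ almost surely — this is where the standard path-loss assumption $\alpha>2$ enters. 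The analogous conditional-mean bound for the cross sums equals $\tfrac1M\sum_{\ell,\ell',k}\beta_{\ell0}^{(1)}\beta_{\ell\ell'}^{(k)}$, with the latter series a.s.\ finite for $\alpha>2$, so the whole cross-interference contribution is $O(1/M)$ in conditional mean. These estimates yield convergence in probability directly; to obtain the stated almost-sure convergence one passes to a subsequence (e.g.\ $M=n^2$) along which a conditional Markov inequality and Borel--Cantelli apply — the relevant averages concentrate around their means at rate $O(1/\sqrt M)$ in $L^1$ — and then controls the gaps between consecutive subsequence indices.

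I expect the main obstacle to be exactly this interchange of limit and infinite summation. Each cross term is only $\Theta(M)$ and so vanishes after dividing by $M^2$, but there are infinitely many of them, and the per-term path gains $\beta_{\ell\ell'}^{(k)}$ have infinite mean (an interfering user can be arbitrarily close to a BS), so the summability of $\sum_{\ell,\ell',k}\beta_{\ell0}^{(1)}\beta_{\ell\ell'}^{(k)}$ must be deduced from the a.s.\ (not $L^1$) behaviour of the PPPs, again relying on $\alpha>2$. Everything else is routine SLLN bookkeeping.
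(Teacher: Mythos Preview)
Your strategy---normalize by $M^2$, apply the SLLN termwise, and justify the interchange of limit and infinite sum by a near/far split with a tail bound---is exactly the paper's. The paper decomposes $\mathrm{SIR}_\mathrm{DL}$ into the same four pieces you identify and treats each in the same way: SLLN for the numerator; a ball $\mathcal{B}(0,R_0)$ split for the pilot-contamination sum and the cross sums, with finitely many near terms handled by SLLN and the tail bounded via Markov's inequality.

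The one substantive difference is how the tail expectation for the cross sums is made finite. You work \emph{conditionally} on the point processes and need $\sum_{\ell,\ell',k}\beta_{\ell0}^{(1)}\beta_{\ell\ell'}^{(k)}<\infty$ a.s.; you correctly flag this as the hard step, since $\beta_{\ell\ell'}^{(k)}$ has infinite mean. The paper instead averages over the PPP using Campbell's formula for the second factorial moment, but to keep the resulting integral $\int |x|^{-\alpha}|x-y|^{-\alpha}\,\mathrm{d}x\,\mathrm{d}y$ finite it \emph{adds} the assumption that no user is within distance $\Delta>0$ of any BS, i.e.\ $r_{\ell\ell'}^{(k)}>\Delta$. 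So the obstacle you anticipate is real, and the paper does not overcome it either---it sidesteps it with an extra hypothesis not stated in the theorem. Your conditional route is cleaner in principle but would require a genuine argument for that a.s.\ summability; the paper's route is quicker but leans on the artificial minimum-distance assumption.

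A minor point in your favour: the paper's proof sketch, as written, only establishes convergence in probability (the $\forall\epsilon,\delta$ statement), despite the theorem claiming a.s.\ convergence. Your proposed upgrade via a subsequence and Borel--Cantelli is the right way to close that gap.
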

\begin{proof}
See Section \ref{proof1}.
\end{proof}
We define the coverage probability of the downlink $\bbP_\mathrm{DL}(T)$ as the probability that the received SIR by the typical user large than a threshold $T$. Assuming that $M$ is large enough, the asymptotic downlink coverage probability can be obtained by deriving the distribution of $\left(\beta_{00}^{(1)}\right)^2/\sum_{\ell\ne0}\left(\beta_{\ell0}^{(1)}\right)^2$.
\begin{theorem}\label{thm:distrib}
When $M\to\infty$, the downlink coverage probability is
\begin{align}
P_\mathrm{DL}(T)=\int_{\mathbb{R}}\frac{\mathrm{e}^{\frac{2j\pi s}{T}}-1}{2j\pi s\;\eta(2j\pi s)}\mathrm{d}s,
\end{align}
where $\eta(x)=\mathrm{e}^{-x}+x^{\frac{1}{\alpha}}\gamma\left(1-\frac{1}{\alpha},x\right)$, and $\gamma(a,z)=\int_{0}^{z}t^{a-1}\mathrm{e}^{-t}\mathrm{d}t$ is the lower incomplete gamma function.
\end{theorem}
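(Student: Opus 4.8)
The plan is to first invoke Theorem~\ref{thm:downsir}, which gives $\mathrm{SIR}_\mathrm{DL}\to V:=\bigl(\beta_{00}^{(1)}\bigr)^2\big/\sum_{\ell\ne0}\bigl(\beta_{\ell0}^{(1)}\bigr)^2$ almost surely as $M\to\infty$; by bounded convergence applied conditionally on the point configuration (given which $V$ is deterministic) this yields $P_\mathrm{DL}(T)=\lim_{M\to\infty}\bbP(\mathrm{SIR}_\mathrm{DL}>T)=\bbP(V>T)$ at every $T$ that is a continuity point of the law of $V$ — which turns out to be all $T>0$ once a density is exhibited. So everything reduces to finding the distribution of $V$.

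Next I would turn $V$ into a ratio of Poisson functionals. Writing $\beta_{\ell0}^{(1)}=r_\ell^{-\alpha}$ with $r_\ell=\|X_\ell\|$, and the serving base station $X_0$ being the point of $\mathcal{N}_\mathrm{b}$ nearest the origin (as follows from the association rule), the map $x\mapsto\pi\lambda_\mathrm{b}\|x\|^2$ sends $\mathcal{N}_\mathrm{b}$ to a unit-rate PPP on $(0,\infty)$ with ordered atoms $u_{(1)}<u_{(2)}<\cdots$; since $\beta^2\propto r^{-2\alpha}=(\pi\lambda_\mathrm{b})^{\alpha}u^{-\alpha}$, the common factor cancels and $V=u_{(1)}^{-\alpha}\big/\sum_{n\ge2}u_{(n)}^{-\alpha}$. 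Hence $\{V>T\}=\{I/S<1/T\}$ with $S=u_{(1)}^{-\alpha}$ and $I=\sum_{n\ge2}u_{(n)}^{-\alpha}$. I then condition on $u_{(1)}=r$, whose density is $e^{-r}$ on $(0,\infty)$; given this, $\{u_{(n)}\}_{n\ge2}$ is a unit-rate PPP on $(r,\infty)$, and the scaling $u=r\tau$ shows that $I/S$ has the law of $J_r:=\sum_j\tau_j^{-\alpha}$, a shot noise driven by a PPP of intensity $r\,\mathrm{d}\tau$ on $(1,\infty)$.

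The computational core is a single Laplace-transform evaluation. By the PPP Laplace functional, $\bbE\bigl[e^{-\omega J_r}\bigr]=\exp\!\bigl(-r\int_1^\infty(1-e^{-\omega\tau^{-\alpha}})\,\mathrm{d}\tau\bigr)$, and the substitution $x=\omega\tau^{-\alpha}$ followed by one integration by parts identifies the inner integral as $e^{-\omega}-1+\omega^{1/\alpha}\gamma\bigl(1-\tfrac1\alpha,\omega\bigr)=\eta(\omega)-1$. Averaging over $r\sim\mathrm{Exp}(1)$ collapses the exponent: with $J$ denoting the mixture of the $J_r$, $\bbE\bigl[e^{-\omega J}\bigr]=\int_0^\infty e^{-r}\,e^{-r(\eta(\omega)-1)}\,\mathrm{d}r=1/\eta(\omega)$, and $P_\mathrm{DL}(T)=\bbP(J\le 1/T)$. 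It remains to recover the CDF of the nonnegative variable $J$ from its Laplace transform $1/\eta$; sliding the Bromwich contour onto the imaginary axis $\omega=2j\pi s$ and using the numerator $e^{\omega/T}-1$ (which removes the pole that $1/(\omega\eta(\omega))$ has at the origin) gives $\bbP(J\le1/T)=\int_{\bbR}\frac{e^{2j\pi s/T}-1}{2j\pi s\,\eta(2j\pi s)}\,\mathrm{d}s$, which is the stated formula.

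I expect this last step — the contour/Fourier inversion — to be the main obstacle. One must verify that $\eta$ has no zeros in the closed right half-plane away from $\omega=0$ (which can be argued from the fact that $1/\eta$ is the Laplace transform of the genuine nonnegative variable $J$, hence analytic and bounded there), that the large semicircular arcs contribute nothing thanks to the asymptotics $\eta(\omega)\sim\Gamma\bigl(1-\tfrac1\alpha\bigr)\omega^{1/\alpha}$ as $|\omega|\to\infty$, and that the indentation around $\omega=0$ is handled correctly — the $-1$ in the numerator being exactly what is needed both to cancel the residue there and to make the integrand integrable near $s=0$, where $\eta(0)=1$. The supporting points (the limit interchange in step one, the atomlessness of $J$ — immediate since its characteristic function $1/\eta(2j\pi\,\cdot)$ vanishes at infinity — and the identification of the serving-link distance with the nearest-BS distance under the assumed association) are comparatively routine.
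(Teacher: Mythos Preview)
Your proposal is correct and follows the same two-step strategy as the paper: identify the Laplace transform of $f=1/\mathrm{SIR}_\mathrm{DL}$ as $1/\eta$, then invert to recover $\bbP(f<1/T)$. The only difference is one of detail---the paper cites \cite{Blaszczyszyn2013} for the Laplace-transform identity (which you derive from first principles via the mapping to a unit-rate PPP on $(0,\infty)$ and conditioning on the nearest point) and names the inversion step simply as Parseval's theorem, whereas you spell it out as a Bromwich/Fourier inversion with the $-1$ in the numerator regularizing the singularity at $s=0$.
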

\begin{proof}
See Section \ref{proof2}.
\end{proof}
Note that the coverage probability is invariant with the BS density. When the threshold $T>1$, we can compute $P_\mathrm{DL}(T)$ in closed form.
\begin{corollary}\label{cor:closed}
When $T>1$, $P_\mathrm{DL}(T)=\frac{\alpha\sin(\pi/\alpha)}{\pi T^{1/\alpha}}.$
\end{corollary}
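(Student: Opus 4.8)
The plan is not to evaluate the contour integral of Theorem~\ref{thm:distrib} directly, but to exploit its probabilistic content together with the special structure available when $T>1$. By Theorem~\ref{thm:downsir}, as $M\to\infty$ the typical user's SIR converges to $Z=L_0\big/\sum_{\ell\ne 0}L_\ell$, where $L_\ell:=(r^{(1)}_{\ell 0})^{-2\alpha}$ and $X_0$ is the serving base station. In the network model the serving base station is the nearest one, so $L_0=\max_\ell L_\ell$. Setting $\mathrm{SIR}_\ell:=L_\ell\big/\sum_{m\ne\ell}L_m$, the base station with the largest $L_\ell$ also has the largest $\mathrm{SIR}_\ell$, hence $Z=\max_\ell\mathrm{SIR}_\ell$ and $P_{\mathrm{DL}}(T)=\mathbb{P}\big(\max_\ell\mathrm{SIR}_\ell>T\big)$.

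The decisive point is that for $T\ge 1$ the events $\{\mathrm{SIR}_\ell>T\}$ are pairwise disjoint: if $\mathrm{SIR}_\ell>1$ and $\mathrm{SIR}_{\ell'}>1$ with $\ell\ne\ell'$, then $L_\ell>\sum_{m\ne\ell}L_m\ge L_{\ell'}$ and, symmetrically, $L_{\ell'}>L_\ell$, which is impossible. Hence for $T>1$ we get $P_{\mathrm{DL}}(T)=\sum_\ell\mathbb{P}(\mathrm{SIR}_\ell>T)=\mathbb{E}\big[\#\{\ell:\mathrm{SIR}_\ell>T\}\big]$. Since the typical user is independent of the base‑station PPP $\mathcal{N}_{\mathrm{b}}$, the Mecke formula and Slivnyak's theorem give, with $I_0:=\sum_{X_m\in\mathcal{N}_{\mathrm{b}}}|X_m|^{-2\alpha}$ the aggregate interference,
\begin{align*}
P_{\mathrm{DL}}(T)=\lambda_{\mathrm{b}}\int_{\mathbb{R}^2}\mathbb{P}\big(|x|^{-2\alpha}>T\,I_0\big)\,\mathrm{d}x .
\end{align*}

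What remains is a standard stochastic‑geometry computation. The shot noise $I_0$ has Laplace transform $\mathbb{E}[\mathrm{e}^{-uI_0}]=\exp\big(-\pi\lambda_{\mathrm{b}}\Gamma(1-1/\alpha)\,u^{1/\alpha}\big)$ (here $\alpha>1$ is needed for convergence), so $I_0$ is one‑sided $(1/\alpha)$‑stable. Passing to polar coordinates and using Fubini,
\begin{align*}
P_{\mathrm{DL}}(T)&=2\pi\lambda_{\mathrm{b}}\int_0^\infty\mathbb{P}\big(I_0<r^{-2\alpha}/T\big)\,r\,\mathrm{d}r\\
&=\pi\lambda_{\mathrm{b}}\,T^{-1/\alpha}\,\mathbb{E}\big[I_0^{-1/\alpha}\big] .
\end{align*}
The fractional moment follows from $\mathbb{E}[I_0^{-1/\alpha}]=\frac{1}{\Gamma(1/\alpha)}\int_0^\infty u^{1/\alpha-1}\mathbb{E}[\mathrm{e}^{-uI_0}]\,\mathrm{d}u=\frac{\alpha}{\pi\lambda_{\mathrm{b}}\Gamma(1-1/\alpha)\Gamma(1/\alpha)}$ after substituting $w=\pi\lambda_{\mathrm{b}}\Gamma(1-1/\alpha)u^{1/\alpha}$. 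Combining and using Euler's reflection formula $\Gamma(1/\alpha)\Gamma(1-1/\alpha)=\pi/\sin(\pi/\alpha)$ gives $P_{\mathrm{DL}}(T)=\alpha\sin(\pi/\alpha)\big/(\pi T^{1/\alpha})$; note that $\lambda_{\mathrm{b}}$ cancels, as it must by Theorem~\ref{thm:distrib}.

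The one genuinely delicate step is the disjointness claim, and it is precisely what fails for $T<1$ (several base stations may then exceed the threshold simultaneously), which is why Theorem~\ref{thm:distrib} does not collapse to a closed form in general. If one prefers to start from the formula of Theorem~\ref{thm:distrib} itself, its integrand is the Fourier inversion of $\mathbb{P}(\Xi<1/T)$ for the nonnegative variable $\Xi$ with Laplace transform $1/\eta$; identifying $\Xi$ with $\sum_{\ell\ne 0}L_\ell/L_0$, the same disjointness/Mecke reduction evaluates $\mathbb{P}(\Xi<1/T)$ whenever $1/T<1$.
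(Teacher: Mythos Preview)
Your proof is correct. The paper itself does not give a self-contained argument here: its entire proof is to invoke Remark~9 of \cite{Blaszczyszyn2013} and observe that the asymptotic SIR in Theorem~\ref{thm:downsir} has exactly the form treated there, with the path-loss exponent effectively doubled from $\alpha$ to $2\alpha$. Your write-up is essentially a self-contained derivation of that cited result: the disjointness of the events $\{\mathrm{SIR}_\ell>T\}$ for $T>1$ converts the coverage probability into a first factorial moment, Campbell--Mecke together with Slivnyak reduces it to an integral against the one-sided $(1/\alpha)$-stable shot-noise law, and Euler's reflection formula closes the calculation. So the two approaches coincide in spirit; what yours buys is transparency---one sees precisely where the restriction $T>1$ enters and why no closed form survives for $T<1$---at the cost of re-deriving a known lemma rather than citing it. One small remark: you use $L_0=\max_\ell L_\ell$, i.e.\ nearest-BS association for the downlink typical user; the paper leaves this implicit in the cellular (Voronoi) model but it is indeed the standing convention in \cite{Andrews2011b,Blaszczyszyn2013}.
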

\begin{proof}
See Remark 9 in \cite{Blaszczyszyn2013}, and substitute the equivalent path loss exponent as $2\alpha$ in the massive MIMO case.
\end{proof}
Note that in Theorem \ref{thm:downsir}, the equivalent path loss exponent doubles in the asymptotic SIR expression. The expression in Corollary \ref{cor:closed} also works for single-antenna networks, and is an increasing function of $\alpha\ge2$. Hence, massive MIMO networks provide better coverage probability asymptotically than single-antenna systems when $T\ge1$.

Next, we investigate the downlink coverage probability when a fixed transmission power constraint is assumed at each BS. In this case, we have to normalize the precoding vector as $\overline{\mathbf{w}}_{\ell}^{(k)}={\mathbf{g}_{\ell}^{(k)}} / {\left\|\mathbf{g}_{\ell}^{(k)}\right\|}$. 
\begin{theorem}
Let $\overline{\mathrm{SIR}}_\mathrm{DL}$ denote the downlink SIR when considering the power constraints at BSs, then it follows that
\begin{align}\label{pcsir}
\lim_{M\to\infty}\overline{\mathrm{SIR}}_\mathrm{DL}\stackrel{a.s.}{\to}\frac{\left(\beta_{00}^{(1)}\right)^2/b_0^{(1)}}{\sum_{\ell\ne0}\left(\beta_{\ell0}^{(1)}\right)^2/b_\ell^{(1)}},
\end{align}
where $b_\ell^{(k)}=\sum_{\ell'\ge0}\beta_{\ell\ell'}^{(k)}$.
\end{theorem}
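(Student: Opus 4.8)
The plan is to reduce the power-constrained case to the unconstrained one already handled in Theorem~\ref{thm:downsir}. With the normalized precoder $\overline{\mathbf{w}}_{\ell}^{(k)} = \mathbf{g}_{\ell}^{(k)}/\|\mathbf{g}_{\ell}^{(k)}\|$, the signal received by the typical user is $\hat{s}_{0}^{(1)} = \sum_{k}\sum_{\ell}\mathbf{h}_{\ell0}^{(1)*}\mathbf{g}_{\ell}^{(k)} s_{\ell}^{(k)}/\|\mathbf{g}_{\ell}^{(k)}\|$, so relative to the unconstrained computation each term in both numerator and denominator of the SIR simply acquires an extra factor $1/\|\mathbf{g}_{\ell}^{(k)}\|^{2}$. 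Thus the only new ingredient needed is the almost-sure asymptotics of $\frac{1}{M}\|\mathbf{g}_{\ell}^{(k)}\|^{2}$ as $M\to\infty$.

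First I would recall that $\mathbf{g}_{\ell}^{(k)} = \sum_{\ell'\ge0}\mathbf{h}_{\ell\ell'}^{(k)}$, and that, conditioned on the point process, the vectors $\mathbf{h}_{\ell\ell'}^{(k)}$ for distinct $\ell'$ are independent with i.i.d.\ $\mathcal{CN}(0,\beta_{\ell\ell'}^{(k)})$ entries; hence $\mathbf{g}_{\ell}^{(k)}$ has i.i.d.\ $\mathcal{CN}\!\big(0,\sum_{\ell'\ge0}\beta_{\ell\ell'}^{(k)}\big)$ entries, i.e.\ variance $b_{\ell}^{(k)}$. By the strong law of large numbers applied to the $M$ i.i.d.\ coordinates, $\frac{1}{M}\|\mathbf{g}_{\ell}^{(k)}\|^{2}\xrightarrow{a.s.} b_{\ell}^{(k)}$. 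Similarly, the cross terms $\frac{1}{M}\mathbf{h}_{\ell0}^{(1)*}\mathbf{g}_{\ell}^{(k)}$ converge almost surely: to $\beta_{\ell0}^{(1)}$ when $(k,\text{``}\ell'=0\text{''})$ contributes the in-pilot same-user component and to $0$ otherwise, exactly as in the proof of Theorem~\ref{thm:downsir}. Writing the SIR as a ratio of these normalized quantities, dividing numerator and denominator by $M^{2}$, and invoking the continuous mapping theorem together with the fact that almost-sure convergence is preserved under finite sums and ratios with nonzero limit denominator, the numerator $\big|\frac{1}{M}\mathbf{h}_{00}^{(1)*}\mathbf{g}_{0}^{(1)}\big|^{2}/\big(\frac{1}{M}\|\mathbf{g}_{0}^{(1)}\|^{2}\big)$ tends to $(\beta_{00}^{(1)})^{2}/b_{0}^{(1)}$, and the $\ell$-th interference term tends to $(\beta_{\ell0}^{(1)})^{2}/b_{\ell}^{(1)}$, giving the claimed limit.

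A technical point requiring care is the interchange of the $M\to\infty$ limit with the infinite sum over $\ell$ in the interference: the denominator is $\sum_{\ell\ne0}$ of terms that each converge a.s., but one must argue the sum converges and that the limit of the sum equals the sum of the limits. This is controlled by the path-loss decay: since $\beta_{\ell0}^{(1)} = (r_{\ell0}^{(1)})^{-\alpha}$ with $\alpha>2$ and the $X_{\ell}$ form a PPP, $\sum_{\ell\ne0}(\beta_{\ell0}^{(1)})^{2}/b_{\ell}^{(1)} \le \sum_{\ell\ne0}(\beta_{\ell0}^{(1)})^{2} < \infty$ almost surely (finite first moment via Campbell's theorem for $\alpha>1$), and a dominated-convergence / uniform-tail argument — bounding $\frac{1}{M^{2}}\|\mathbf{h}_{\ell0}^{(1)*}\mathbf{g}_{\ell}^{(k)}|^{2}$ and $\frac{1}{M}\|\mathbf{g}_{\ell}^{(k)}\|^{2}$ uniformly in $M$ by integrable random variables — lets us pass the limit inside. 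I expect this uniform-integrability / limit-exchange step, rather than the elementary SLLN computations, to be the main obstacle; the rest mirrors the proof of Theorem~\ref{thm:downsir} essentially verbatim with the extra normalizing factors carried along.
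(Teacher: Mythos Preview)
Your proposal is correct and matches the paper's own treatment: the paper does not spell out a separate proof but simply states that the argument is ``similar to that of Theorem~\ref{thm:downsir},'' which is exactly the reduction you carry out---showing $\frac{1}{M}\|\mathbf{g}_\ell^{(k)}\|^2\to b_\ell^{(k)}$ by the SLLN and then rerunning the truncation/Campbell--Markov tail control of Theorem~\ref{thm:downsir} with the extra normalizing factors. Your identification of the limit-exchange over the infinite sum as the only nontrivial step is precisely the point handled by the ball-truncation argument in Section~\ref{proof1}.
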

The proof is similar to that of Theorem \ref{thm:downsir}. Note that the denominator $b_\ell^{(k)}$ in the sum introduces high correlations between terms, which renders the distribution of (\ref{pcsir}) difficult to compute. Simulations in Section \ref{simu} show that the downlink coverage probability evaluated without power constraints is generally an upper bound for the power-constrained cases.

Based on the results of coverage probability, we can compute the asymptotic downlink achievable rate.
\begin{theorem}\label{rate}
When $M$ goes to infinity, the achievable rate $\Gamma_\mathrm{DL}$ can be computed as
\begin{align}
\Gamma_\mathrm{DL}=\int_{T>0}\frac{P_\mathrm{DL}(T)}{1+T}\mathrm{d}T.
\end{align}
\end{theorem}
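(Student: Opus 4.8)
\textbf{Proof plan for Theorem \ref{rate}.}
The starting point is the information‑theoretic definition of the per‑user achievable rate as the ergodic spectral efficiency $\Gamma_\mathrm{DL}^{(M)}=\mathbb{E}\!\left[\ln\!\left(1+\mathrm{SIR}_\mathrm{DL}\right)\right]$, the expectation being taken over the small‑scale fading and the point processes $\mathcal{N}_\mathrm{b}$ and $\{\Phi_\mathrm{u}^{(k)}\}$, and then to pass to the limit $M\to\infty$. By Theorem \ref{thm:downsir} we have $\mathrm{SIR}_\mathrm{DL}\stackrel{a.s.}{\to}\mathrm{SIR}_\infty:=\left(\beta_{00}^{(1)}\right)^2\big/\sum_{\ell\ne0}\left(\beta_{\ell0}^{(1)}\right)^2$. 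Since $x\mapsto\ln(1+x)$ is continuous and nonnegative on $[0,\infty)$, it follows that $\ln(1+\mathrm{SIR}_\mathrm{DL})\stackrel{a.s.}{\to}\ln(1+\mathrm{SIR}_\infty)$, and a convergence argument (see the obstacle below) gives $\Gamma_\mathrm{DL}=\lim_{M\to\infty}\Gamma_\mathrm{DL}^{(M)}=\mathbb{E}\!\left[\ln(1+\mathrm{SIR}_\infty)\right]$.

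The second step is the standard ``layer‑cake'' identity: for any nonnegative random variable $X$,
\begin{align*}
\ln(1+X)=\int_0^X\frac{\mathrm{d}t}{1+t}=\int_0^\infty\frac{\mathbf{1}\{t<X\}}{1+t}\,\mathrm{d}t ,
\end{align*}
so that, by Tonelli's theorem (every integrand is nonnegative),
\begin{align*}
\mathbb{E}\!\left[\ln(1+X)\right]=\int_0^\infty\frac{\mathbb{P}(X>t)}{1+t}\,\mathrm{d}t .
\end{align*}
Applying this with $X=\mathrm{SIR}_\infty$ and recalling that, by the definition of the coverage probability and Theorem \ref{thm:distrib}, $\mathbb{P}(\mathrm{SIR}_\infty>T)=P_\mathrm{DL}(T)$, we obtain the claimed expression $\Gamma_\mathrm{DL}=\int_{T>0}\frac{P_\mathrm{DL}(T)}{1+T}\,\mathrm{d}T$.

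I expect the main obstacle to be the rigorous justification of interchanging the limit in $M$ with the expectation, because $\ln(1+\mathrm{SIR}_\mathrm{DL})$ is not a priori dominated by a single integrable function uniformly in $M$; a clean route is either to simply take the asymptotic rate to be the rate of the limiting system, $\Gamma_\mathrm{DL}:=\mathbb{E}[\ln(1+\mathrm{SIR}_\infty)]$, or to truncate the SIR at a level $L$, apply bounded convergence for each fixed $L$, and let $L\to\infty$ by monotone convergence. One should also verify that the resulting integral is finite: near $T=0$ the integrand is bounded by $1$, while for large $T$ Corollary \ref{cor:closed} gives $P_\mathrm{DL}(T)/(1+T)=\Theta\!\left(T^{-1/\alpha-1}\right)$, which is integrable at infinity since $\alpha\ge2$. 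The remaining manipulations are then routine.
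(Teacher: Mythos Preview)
The paper does not actually prove Theorem~\ref{rate}; it is stated without proof as a standard identity relating the ergodic rate to the coverage probability. Your proposal supplies exactly the argument one would expect: the layer-cake representation $\ln(1+X)=\int_0^\infty \mathbf{1}\{X>t\}/(1+t)\,\mathrm{d}t$ combined with Tonelli, applied to the asymptotic SIR from Theorem~\ref{thm:downsir}, and the identification $\mathbb{P}(\mathrm{SIR}_\infty>T)=P_\mathrm{DL}(T)$ from Theorem~\ref{thm:distrib}. This is correct and is the standard route; your additional remarks on the $M\to\infty$ interchange and on integrability at infinity via Corollary~\ref{cor:closed} go beyond what the paper provides and are appropriate if one wants a fully rigorous statement rather than the heuristic ``asymptotic rate $=$ rate of the limiting system'' that the paper implicitly adopts.
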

As shown in Section \ref{simu}, the asymptotic downlink rate per user in the massive MIMO networks is larger than the rate of the baseline network where each BS is equipped with 1 antenna, and serves 1 user at a time.
\subsection{Uplink Analysis}
Now we denote $s_{\ell}^{(k)}$  as the data symbol that the user $Y_{\ell}^{(k)}$ transmits to the BS $X_\ell$. In the uplink transmission stage, we fix a typical BS $X_0$ at the origin. Then $X_0$ receives
\begin{align*}
\mathbf{u}_{0}=\sum_{1\le k\le K}\sum_{X_{\ell}\in\mathcal{N}_\mathrm{b}}\mathbf{h}_{0\ell}^{(k)}s_{\ell}^{(k)}.
\end{align*}
To decode the uplink data from a mobile user, e.g user $Y_0^{(1)}$, BS $X_0$ uses the channel estimates obtained from channel estimation stage to do maximum ratio combining as
\begin{align*}
\hat{s}_{0}^{(1)}&=\left(\mathbf{g}_{00}^{(1)}\right)^{*}\mathbf{u}_{0}\\
&=\sum_{k=1}^{K}\sum_{X_\ell,X_{\ell'}\in\mathcal{N}_\mathrm{b}}\mathbf{h}_{0\ell}^{(1)*}\mathbf{h}_{0 \ell'}^{(k)}s_{\ell}^{(k)}.
\end{align*}
The corresponding uplink $\mathrm{SIR}_\mathrm{UL}$ for $\hat{s}_{0}^{(1)}$ is
\begin{align*}
\frac{\|h_{00}^{(1)}\|^4}{\sum_{\ell\ne0}\|h_{0\ell}^{(1)}\|^4+\sum_{\ell\ne\ell'}|h_{0\ell}^{(1)'}h_{0\ell'}^{(1)}|^2+\sum_{k\ne1}\sum_{\ell',\ell}|h_{0\ell}^{(1)'}h_{0\ell'}^{(k)}|^2}
\end{align*}
Similar to the downlink SIR, when $M\to\infty$ we have the following theorem.
\begin{theorem}\label{thm:sir}
When $M\to\infty$, $\mathrm{SIR}_\mathrm{UL}\stackrel{a.s.}{\to}\frac{\left(\beta_{00}^{(1)}\right)^2}{\sum_{\ell\ne0}\left(\beta_{0\ell}^{(1)}\right)^2}$.
\end{theorem}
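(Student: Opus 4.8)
The plan is to follow the proof of Theorem~\ref{thm:downsir} essentially verbatim: the uplink $\mathrm{SIR}_\mathrm{UL}$ has the same algebraic form as $\mathrm{SIR}_\mathrm{DL}$ once the base-station index and the user index are interchanged, so I would reuse that argument and only point out the changes. First I would divide the numerator and the three interference sums of $\mathrm{SIR}_\mathrm{UL}$ by $M^2$ and condition on the realization of $\mathcal{N}_\mathrm{b}$ together with the user locations, so that every path gain $\beta_{0\ell}^{(k)}=(r_{0\ell}^{(k)})^{-\alpha}$ becomes a fixed constant and the only remaining randomness is in the i.i.d.\ $\mathcal{CN}(0,1)$ fading coefficients $v_{0\ell}^{(k)}(m)$.

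Next I would establish the termwise limits with the strong law of large numbers. Since $\frac1M\|\mathbf{h}_{0\ell}^{(k)}\|^2=\beta_{0\ell}^{(k)}\cdot\frac1M\sum_{m=1}^M|v_{0\ell}^{(k)}(m)|^2\stackrel{a.s.}{\to}\beta_{0\ell}^{(k)}$, the normalized numerator satisfies $\frac1{M^2}\|\mathbf{h}_{00}^{(1)}\|^4\stackrel{a.s.}{\to}(\beta_{00}^{(1)})^2$ and each diagonal interference term satisfies $\frac1{M^2}\|\mathbf{h}_{0\ell}^{(1)}\|^4\stackrel{a.s.}{\to}(\beta_{0\ell}^{(1)})^2$. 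For two distinct cell/pilot pairs the coefficient families $\{v_{0\ell}^{(1)}(m)\}_m$ and $\{v_{0\ell'}^{(k)}(m)\}_m$ are independent and zero mean, so $\frac1M(\mathbf{h}_{0\ell}^{(1)})^{*}\mathbf{h}_{0\ell'}^{(k)}=\sqrt{\beta_{0\ell}^{(1)}\beta_{0\ell'}^{(k)}}\cdot\frac1M\sum_{m=1}^M\overline{v_{0\ell}^{(1)}(m)}\,v_{0\ell'}^{(k)}(m)\stackrel{a.s.}{\to}0$, so every cross term in the second and third interference sums vanishes termwise.

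The real work --- and the step I expect to be the main obstacle --- is interchanging $\lim_{M\to\infty}$ with the \emph{infinite} PPP sums, which I would do through a uniform-in-$M$ domination. For a homogeneous PPP in the plane with $\alpha>2$ one has $\sum_{\ell\ne0}\beta_{0\ell}^{(1)}<\infty$ and $\sum_{\ell\ne0}(\beta_{0\ell}^{(1)})^2<\infty$ almost surely, so the candidate limit is a.s.\ finite and strictly positive. For the diagonal sum I would peel off the $N$ nearest interferers --- writing $\rho_N$ for the distance from $X_0$ to the $N$-th nearest one, the finite head $\sum_{r_{0\ell}^{(1)}\le\rho_N}$ converges a.s.\ by the termwise limits --- and bound the tail uniformly in $M$ via $\mathbb{E}\big[\frac1{M^2}\sum_{r_{0\ell}^{(1)}>\rho_N}\|\mathbf{h}_{0\ell}^{(1)}\|^4\big]=\frac{M+1}{M}\sum_{r_{0\ell}^{(1)}>\rho_N}(\beta_{0\ell}^{(1)})^2\le2\sum_{r_{0\ell}^{(1)}>\rho_N}(\beta_{0\ell}^{(1)})^2\to0$ as $N\to\infty$, after which a Borel--Cantelli/$\varepsilon$ argument swaps the limit and the sum. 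The cross sums are even more direct, since their full fading-expectation carries an extra $1/M$: $\mathbb{E}\big[\frac1{M^2}\sum_{\ell\ne\ell'}|\mathbf{h}_{0\ell}^{(1)*}\mathbf{h}_{0\ell'}^{(1)}|^2\big]=\frac1M\sum_{\ell\ne\ell'}\beta_{0\ell}^{(1)}\beta_{0\ell'}^{(1)}\le\frac1M\big(\sum_{\ell}\beta_{0\ell}^{(1)}\big)^2\to0$, and the mismatched-pilot sum is bounded the same way, so --- again with a second-moment/Borel--Cantelli step for the almost-sure claim --- both vanish. Hence the normalized denominator converges a.s.\ to $\sum_{\ell\ne0}(\beta_{0\ell}^{(1)})^2$.

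Finally I would combine the numerator and denominator limits to obtain $\mathrm{SIR}_\mathrm{UL}\stackrel{a.s.}{\to}(\beta_{00}^{(1)})^2\big/\sum_{\ell\ne0}(\beta_{0\ell}^{(1)})^2$ conditionally on the locations; since this holds for almost every realization of the point processes, the unconditional almost-sure statement follows. In short, I expect nothing beyond a reprise of the proof of Theorem~\ref{thm:downsir}, with the one part worth spelling out being the uniform tail estimate that legitimizes passing the limit inside the infinite interference sum.
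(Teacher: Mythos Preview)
Your proposal is correct and, at the structural level, identical to what the paper does: the paper gives no separate proof for this theorem and simply defers to the argument for Theorem~\ref{thm:downsir}, exactly as you suggest. The one technical difference worth flagging is how the limit is pushed through the infinite PPP sum. The paper keeps the point process random throughout, bounds the expected tail outside a large ball $\mathcal{B}(0,R_0)$ via Campbell's formula and Markov's inequality, and for the cross terms introduces an ad~hoc minimum-separation assumption $r_{\ell\ell'}^{(k)}>\Delta>0$ to keep the second factorial-moment integral finite. You instead condition on the realization and exploit the almost-sure finiteness of $\sum_\ell\beta_{0\ell}^{(1)}$ and $\sum_\ell(\beta_{0\ell}^{(1)})^2$ when $\alpha>2$; this is arguably cleaner and sidesteps the artificial $\Delta$, but be aware that your Borel--Cantelli step will need a genuine variance bound (a first-moment estimate of order $1/M$ is not summable in $M$) or a subsequence argument to upgrade from $L^1$ convergence to the claimed almost-sure convergence---a point on which the paper's own proof is equally loose, since as written it too only yields convergence in probability.
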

Under the PPP user assumption, the uplink coverage probability and rate are derived in the following corollary.
\begin{corollary}
Under the PPP user approximation, the asymptotic uplink coverage probability $P_\mathrm{UL}(T)=P_\mathrm{DL}(T)$, and the asymptotic uplink rate per user $\Gamma_\mathrm{UL}=\Gamma_\mathrm{DL}$, where $P_\mathrm{DL}(T)$ and $\Gamma_\mathrm{DL}$ are derived in Theorem \ref{thm:distrib} and Theorem \ref{rate} respectively.
\end{corollary}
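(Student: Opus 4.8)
The plan is to reduce the corollary to a single distributional identity and then quote the downlink results verbatim. First I would apply Theorem~\ref{thm:sir}: as $M\to\infty$ the uplink SIR converges almost surely to $R_\mathrm{UL}:=\left(\beta_{00}^{(1)}\right)^2/\sum_{\ell\ne0}\left(\beta_{0\ell}^{(1)}\right)^2$, while Theorem~\ref{thm:downsir} gives $\mathrm{SIR}_\mathrm{DL}\stackrel{a.s.}{\to}R_\mathrm{DL}:=\left(\beta_{00}^{(1)}\right)^2/\sum_{\ell\ne0}\left(\beta_{\ell0}^{(1)}\right)^2$. By the almost-sure convergence (at continuity points of the limiting laws) the asymptotic coverage probabilities are $P_\mathrm{UL}(T)=\bbP(R_\mathrm{UL}>T)$ and $P_\mathrm{DL}(T)=\bbP(R_\mathrm{DL}>T)$, exactly as $P_\mathrm{DL}$ was obtained in Theorem~\ref{thm:distrib}. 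Hence it suffices to prove $R_\mathrm{UL}\stackrel{d}{=}R_\mathrm{DL}$; the rate claim then follows for free, since the computation behind Theorem~\ref{rate} yields $\Gamma_\mathrm{UL}=\int_{T>0}P_\mathrm{UL}(T)/(1+T)\,\mathrm{d}T$, and $P_\mathrm{UL}\equiv P_\mathrm{DL}$ forces $\Gamma_\mathrm{UL}=\Gamma_\mathrm{DL}$.

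Next I would write the two path-gain families explicitly. In the downlink the typical user $Y_0^{(1)}$ is at the origin, so $\beta_{\ell0}^{(1)}=\|X_\ell\|^{-\alpha}$ and $R_\mathrm{DL}$ is a fixed functional of the distances from the origin to the BS process $\mathcal{N}_\mathrm{b}$ (a homogeneous PPP of intensity $\lambda_\mathrm{b}$), with the serving BS $X_0$ being its nearest point and the interference sum running over all the other points. In the uplink the typical BS $X_0$ is at the origin, so $\beta_{0\ell}^{(1)}=\|Y_\ell^{(1)}\|^{-\alpha}$ and $R_\mathrm{UL}$ is the \emph{same} functional of the distances from the origin to the pilot-$1$ user process $\Phi_\mathrm{u}^{(1)}$, which under the PPP user approximation is also a homogeneous PPP of intensity $\lambda_\mathrm{b}$, with the served user $Y_0^{(1)}$ its nearest point.

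I would then close the argument with Slivnyak's theorem. In the downlink the typical user is independent of $\mathcal{N}_\mathrm{b}$, and in the uplink, under the PPP user approximation, the typical BS is independent of $\Phi_\mathrm{u}^{(1)}$; hence in both cases conditioning on the node at the origin leaves the relevant point process a homogeneous PPP of intensity $\lambda_\mathrm{b}$, and in both cases the serving link is the nearest point of that process. Therefore the ordered distance vectors $\big(\|X_{(0)}\|,\|X_{(1)}\|,\ldots\big)$ and $\big(\|Y^{(1)}_{(0)}\|,\|Y^{(1)}_{(1)}\|,\ldots\big)$ have the same law, which yields $R_\mathrm{UL}\stackrel{d}{=}R_\mathrm{DL}$, hence $P_\mathrm{UL}(T)=P_\mathrm{DL}(T)$ and $\Gamma_\mathrm{UL}=\Gamma_\mathrm{DL}$.

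The hard part is the rigorous justification of this base-station/user symmetry, i.e. that the downlink family $\{\beta_{\ell0}^{(1)}\}$ (distances from a typical user to the BSs) and the uplink family $\{\beta_{0\ell}^{(1)}\}$ (distances from a typical BS to the pilot-$1$ users) are identically distributed. This is precisely where the PPP user approximation is essential: the true pilot-sharing users are confined to Voronoi cells and are not independent of $\mathcal{N}_\mathrm{b}$, so without the approximation the Palm distribution of the user process given a BS at the origin would not be a plain PPP and the symmetry would break. A secondary point that needs care is reconciling the nearest-user association posited in the user model with the serving rule implicit in the downlink setup, so that on both sides the serving-link distance has the same contact-distance law $\bbP\!\left(r_{00}^{(1)}>r\right)=\mathrm{e}^{-\pi\lambda_\mathrm{b}r^2}$.
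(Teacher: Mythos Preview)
Your proposal is correct and follows the same approach as the paper: the paper's proof is a three-line assertion that, under the PPP user approximation, the uplink functional $\left(\beta_{00}^{(1)}\right)^2/\sum_{\ell\ne0}\left(\beta_{0\ell}^{(1)}\right)^2$ viewed from a typical BS at the origin has the same law as the downlink functional $\left(\beta_{00}^{(1)}\right)^2/\sum_{\ell\ne0}\left(\beta_{\ell0}^{(1)}\right)^2$ viewed from a typical user at the origin, precisely because both are the same nearest-point-normalized shot-noise functional of a homogeneous PPP of intensity $\lambda_\mathrm{b}$. Your argument is a more carefully spelled-out version of this same symmetry, explicitly invoking Slivnyak and the identical contact-distance laws; the additional commentary on why the approximation is essential is accurate and goes beyond what the paper states.
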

\begin{proof}
Under the PPP user approximation
\begin{align*}
P_\mathrm{UL}(T)&=
\mathbb{P}_{Y^{(1)}_0}\left[\frac{\left(\beta_{00}^{(1)}\right)^2}{\sum_{\ell\ne0}\left(\beta_{\ell0}^{(1)}\right)^2}>T\right]\\
&=\mathbb{P}_{X_0}\left[\frac{\left(\beta_{00}^{(1)}\right)^2}
{\sum_{\ell\ne0}\left(\beta_{0\ell}^{(1)}\right)^2}>T\right]\\
&=P_\mathrm{DL}(T),
\end{align*}
where $\bbP_{X}$ is the conditional probability given that $X$ is located at the origin.
\end{proof}

\subsection{Sum Rate}

Now we examine the sum rate in a cell. Given there are $L$ channel uses in a time block, the asymptotic sum throughput of a cell in a time block is $\Gamma_{\mathrm{tot}}=K(L-K)\Gamma_\mathrm{DL}/L,$ where $K$ is the length of the pilot sequence and equivalently the number of users served simultaneously by a BS. Note that due to the large number of BS antennas, all $K$ users in a cell can be served free of intra-cell interference simultaneously. The rate per user $\Gamma_{\mathrm{DL}}$ in massive MIMO networks is independent of $K$. Hence, to achieve the optimal asymptotic throughput, we let $K=\lfloor \frac{L}{2}\rfloor $, and the optimal average sum rate in a time block is $\Gamma_{\mathrm{tot}}^{*}\approx\frac{L}{4}\Gamma_\mathrm{DL},$ which scales with the coherence time.

\section{Simulations}\label{simu}
We simulate the downlink coverage probability in Fig. \ref{fig:coverage} and compare with the analytical form. We choose the baseline model as the network consisting of single-antenna BSs. When $\alpha>2$, simulations illustrate that the coverage probability in the massive MIMO network is better than that in the baseline network. 
It is shown that when power constraints is considered in the downlink, the asymptotic coverage probability is worse than the ideal situation with no power constraint.

Next, we compare the asymptotic rate per user and the average sum rate of a cell between the massive MIMO network and the baseline network in Table \ref{table:rate}. It can be concluded that though limited by pilot contamination, the asymptotic rate per user with many antennas is still much larger than that in the network with single-antenna BSs. Moreover, the gain of the average sum rate is even larger in massive MIMO networks, for with a large number of antennas, a BS can serve multiple users simultaneously without intra-cell interference.
\begin{table}
\begin{center}
\caption{Comparison of Achievable Rate With $\alpha=4$ and $L=16$}\label{table:rate}
\begin{tabular}{ c | c | c  }
\hline
    \mbox{Number of antennas at BSs} & \mbox{Infinity} & 1 \\ \hline
    \mbox{Achievable rate per User(bps/Hz)} &3.79 & 2.15\\ \hline
  \mbox{Average sum rate per cell (bps/Hz)} & 15.16& 2.15\\ \hline
    \end{tabular}
  \end{center}
\end{table}
\begin{figure} [!ht]
\centerline{
\includegraphics[width=0.8\columnwidth]{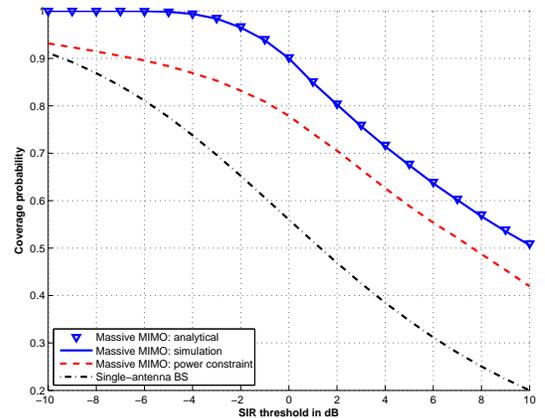}
   }
\caption{Comparison of coverage when $\alpha=4$. The asymptotic coverage probability with power constraint is plotted by Monte Carlos simulations.}\label{fig:coverage}
\end{figure}
\section{Conclusions}
In this letter, we proposed analytical expressions to evaluate the asymptotic coverage probability in both downlink and uplink in massive MIMO networks. When the SIR threshold is larger than 1, the asymptotic coverage probability can be expressed in closed form. The asymptotic rate per user is computed, and shown better than the network with single antenna BS. Moreover, we also showed that the total sum rate in a cell is in proportion to the length of the coherent time. More practical issues, such as space correlation between antennas and power control in the downlink, are expected to  be incorporated in the future analysis.
\section{Appendix}
\subsection{Proof Sketch of Theorem \ref{thm:downsir}}\label{proof1}
To prove Theorem \ref{thm:downsir}, it is equivalent to show the following:
\begin{align}
\lim_{M\to\infty}\frac{\|h_{00}^{(1)}\|^4}{M^2}-\left(\beta_{00}^{(1)}\right)^2\stackrel{a.s.}{=}0,\label{eqn:1}\\
\lim_{M\to\infty}\frac{\sum_{\ell\ne0}\|h_{\ell0}^{(1)}\|^4}{M^2}-\sum_{\ell\ne0}\left(\beta_{\ell0}^{(1)}\right)^2\stackrel{a.s.}{=}0,\label{eqn:2}\\
\lim_{M\to\infty}\frac{\sum_{\ell'\ne0,\ell}|h_{\ell0}^{(1)*}h_{\ell\ell'}^{(1)}|^2}{M^2}\stackrel{a.s.}{=}0,\label{eqn:3}\\
\lim_{M\to\infty}\frac{\sum_{k\ne1}\sum_{\ell,\ell'}|h_{\ell0}^{(1)*}h_{\ell\ell'}^{(k)}|^2}{M^2}\stackrel{a.s.}{=}0.\label{eqn:4}
\end{align}
\begin{proof}
The proof of (\ref{eqn:1}) is a direct application of the strong law of large numbers (SLLN), e.g see \cite{Ashikhmin2012}.

To prove (\ref{eqn:2}), we need to show that $\forall \epsilon,\delta>0$, $\exists M_0>0$, such that $\forall m>M_0$, we have
\begin{align*}
\bbP\left[\left|\frac{\sum_{\ell\ne0}\|h_{\ell0}^{(1)}\|^4}{m^2}-\sum_{\ell\ne0}\left(\beta_{\ell0}^{(1)}\right)^2\right|<\epsilon\right]>1-\delta.
\end{align*}
Let $\delta_0=1-(1-\delta)^{1/3}$. First we show that there exists a ball $\mathcal{B}(0,R_0)$ large enough, such that
\begin{align}\bbP\left[\sum_{X_\ell\notin\mathcal{B}(0,R_0)}\left|\frac{\|h_{\ell0}^{(1)}\|^4}{M^2}-\left(\beta_{\ell0}^{(1)}\right)^2\right|>\epsilon/3\right]<\delta_0.\label{eqn:p1}\end{align}
 By Campbell's formula, it holds that for all $M\ge1$
\begin{align*}
\bbE\left[\sum_{X_\ell\notin\mathcal{B}(0,R_0)}\left|\frac{\|h_{\ell0}^{(1)}\|^4}{M^2}-\left(\beta_{\ell0}^{(1)}\right)^2\right|\right]<\frac{3\pi\lambda_\mathrm{b}}{2\alpha-2}R_0^{2-2\alpha}.
\end{align*}
By Markov's inequality, we can always find $R_0$ large enough such that (\ref{eqn:p1}) is satisfied.

Next, it can be shown that $\forall R_0>0$, there exists $N$ large enough such that there are at most $N$ BSs inside $\mathcal{B}(0,R_0)$ with probability $1-\delta_0$. Then given there are at most $N$ BSs inside $\mathcal{B}(0,R_0)$, by the SLLN, there exists $M_0$ large enough such that $\forall m>M_0$, it holds that \begin{align*}\bbP\left[\sum_{X_\ell\in\mathcal{B}(0,R_0)}\left|\frac{\|h_{\ell0}^{(1)}\|^4}{m^2}-\left(\beta_{\ell0}^{(1)}\right)^2\right|>\epsilon/3\right]<\delta_0.\end{align*}
Finally, it follows that
\begin{align*}
\bbP\left[\left|\frac{\sum_{\ell\ne0}\|h_{\ell0}^{(1)}\|^4}{m^2}-\sum_{\ell\ne0}\left(\beta_{\ell0}^{(1)}\right)^2\right|<\epsilon\right]>(1-\delta_0)^3=1-\delta.
\end{align*}

To prove (\ref{eqn:3}), we assume that no users and BSs are co-located, i.e $\forall r_{\ell\ell'}^{(k)}>\Delta$ for some constant $\Delta>0$. First, we can show that there exists a ball $\mathcal{B}(0,R_0)$ large enough, such that \begin{align}\label{eqn:proof2}\bbP\left[\sum_{X_\ell\notin\mathcal{B}(0,R_0),X_\ell} \frac{|h_{\ell0}^{(1)*}h_{\ell\ell'}^{(1)}|^2}{M^2}\ge\frac{\epsilon}{3}\right]<\delta/3.
\end{align}
It follows that
\begin{align*}
&\bbE\left[\sum_{X_\ell\notin\mathcal{B}(0,R_0)} \frac{|h_{\ell0}^{(1)*}h_{\ell\ell'}^{(1)}|^2}{M^2}\right]\\
&\stackrel{(a)}{=}\frac{2\lambda_\mathrm{b}^2}{M^2}\int_{\overline{\mathcal{B}(0,R_0)}}\int_{\bbR^2}\frac{1}{|x|^{\alpha}}\frac{1}{|x-y|^{\alpha}}\mathrm{d}x\mathrm{d}y\\
&{\le}\frac{8\pi^2\lambda_\mathrm{b}^2}{M^2}\int_{R_0}^{\infty}\frac{1}{r^{\alpha-1}}\mathrm{d}r\int_{\Delta}^{\infty}\frac{1}{t^{\alpha-1}}\mathrm{d}t\\
&\le\frac{8\pi^2\lambda_\mathrm{b}^2}{M^2(\alpha-2)^2\Delta^{\alpha-2}R_0^{\alpha-2}},
\end{align*}
where step $(a)$ is from Campbell's formula for factorial moment of PPP. By Markov's inequality, we can always choose $R_0$ large enough such that (\ref{eqn:proof2}) is satisfied.

Next, we can show that there exists $N$ large enough, such that there are at most $N$ BSs in $\mathcal{B}(0,R_0)$ with probability $1-\delta/3$. Given there are at most $N$ BSs in  $\mathcal{B}(0,R_0)$, we can show that for $M$ large enough,
\begin{align*}\bbP\left[\sum_{X_\ell\in\mathcal{B}(0,R_0),X_\ell} \frac{|h_{\ell0}^{(1)*}h_{\ell\ell'}^{(1)}|^2}{M^2}\ge\frac{\epsilon}{3}\right]<\delta/3.
\end{align*}
Hence, using union bound, we can prove that
\begin{align*}\bbP\left[\sum_{\ell'\ne0,\ell} \frac{|h_{\ell0}^{(1)*}h_{\ell\ell'}^{(1)}|^2}{M^2}\ge\epsilon\right]<\frac{\delta}{3}+\frac{\delta}{3}+\frac{\delta}{3}=\delta.
\end{align*}
Similarly, we can prove the convergence of (\ref{eqn:4}) following the same steps as the proof of (\ref{eqn:3}).
\end{proof}

\subsection{Proof of Theorem \ref{thm:distrib}}\label{proof2}
Define $f=1/\mbox{SIR}_\mathrm{DL}$ as the inverse of the downlink SIR. By \cite{Blaszczyszyn2013}, the Laplace transform of $f$ is $\bbE\left[\mathrm{e}^{-zf}\right]=1/\eta(z)$. By Parseval's theorem, it follows
\begin{align*}
P_\mathrm{DL}(T)=\bbP(f<1/T)=\int_{\mathbb{R}}\frac{\mathrm{e}^{\frac{2j\pi s}{T}}-1}{2j\pi s\;\eta(2j\pi s)}\mathrm{d}s.
\end{align*}
\bibliographystyle{IEEEtran}

\end{document}